%
%
%
%

\documentclass[prodmode,acmtkdd]{acmsmall} 

\usepackage[ruled]{algorithm2e}
\usepackage{amsmath,amssymb}
\usepackage{graphicx}
\DeclareMathOperator*{\argmax}{arg\,max}

\SetAlFnt{\small}
\SetAlCapFnt{\small}
\SetAlCapNameFnt{\small}
\SetAlCapHSkip{0pt}
\IncMargin{-\parindent}

\acmVolume{0}
\acmNumber{0}
\acmArticle{0}
\acmYear{0}
\acmMonth{0}

\begin{document}

\markboth{Z. Wang et al.}{Information Coverage Maximization in Social Networks}

\title{Information Coverage Maximization in Social Networks}
\author{ZHEFENG WANG
\affil{University of Science and Technology of China}
ENHONG CHEN
\affil{University of Science and Technology of China}
QI LIU
\affil{University of Science and Technology of China}
YU YANG
\affil{Simon Fraser University}
YONG GE
\affil{¡ì University of North Carolina at Charlotte}
BIAO CHANG
\affil{University of Science and Technology of China}
}

\begin{abstract}
Social networks, due to their popularity, have been studied extensively these years. A rich body of these studies is related to influence maximization, which aims to select a set of seed nodes for maximizing the expected number of active nodes at the end of the process. However, the set of active nodes can not fully represent the true coverage of information propagation. A node may be informed of the information when any of its neighbours become active and try to activate it, though this node (namely informed node) is still inactive. Therefore, we need to consider both active nodes and informed nodes that are aware of the information when we study the coverage of information propagation in a network. Along this line, in this paper we propose a new problem called \textsl{Information Coverage Maximization} that aims to maximize the expected number of both active nodes and informed ones. After we prove that this problem is NP-hard and submodular in the independent cascade model and the linear threshold model, we design two algorithms to solve it. Extensive experiments on three real-world data sets demonstrate the performance of the proposed algorithms.
\end{abstract}

\category{H.2.8}{Database Management}{Database Applications-Data Mining}

\terms{Design, Algorithms, Performance}

\keywords{Social networks, Information coverage}


\begin{bottomstuff}
This is a technical report.
\end{bottomstuff}

\maketitle

\section{Introduction}
Recent years have witnessed the popularity of online social networking sites such as Facebook and Twitter. Many people spend much time on these sites and share different kinds of information with their friends. Social networks play important roles in the spread of information, ideas or opinions. Therefore, the analysis of information propagation in social networks has been a critical research area these years.

In the literature, many efforts have been made on the development of information propagation models. For example, \textsl{Independent Cascade}~(IC)~model~\cite{goldenberg2001talk} and \textsl{Linear Threshold}~(LT)~model~\cite{granovetter1978threshold}, a data-based credit distribution model~\cite{goyal2011data} and linear social influence model~\cite{ijcai13biaoxiang} were proposed to describe the information diffusion process. Among these models, IC and LT models are stochastic diffusion models~\cite{chen2013information} which specify the randomized process of information propagation. In these models, each node in the network has two possible states: active and inactive. Intuitively, an active node can be viewed as adopting the new information that is propagated in the network. During the diffusion process, the active nodes will try to activate their neighbors and the inactive nodes will not.

Given an information propagation model, most of the existing works focused on selecting a set of seed nodes to be activated that could lead to the maximum expected number of active nodes. This selection problem is formulated as a discrete optimization problem called \textsl{Influence Maximization}~\cite{kempe2003maximizing}. This problem, due to its important application in viral marketing, has been extensively explored (~\cite{kimura2006tractable,Wang_community,liu2010mining,kim2013scalable,borgs2014maximizing,wang2014influential}~).
\begin{figure}
  \centering
  \includegraphics[width=0.75\textwidth]{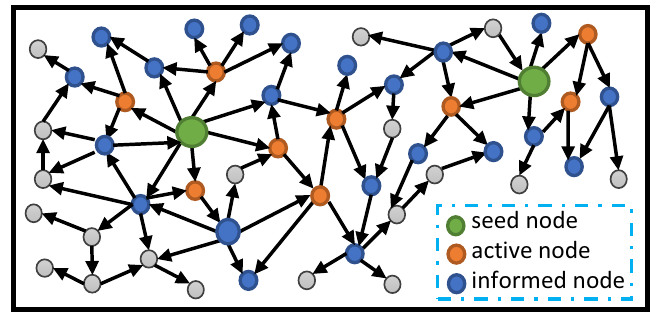}\\
  \caption{Information propagation in a social network}\label{fig:social_graph}
\end{figure}

However, during the process of information propagation, there are actually two types of inactive nodes. For example, when we publish a message in Twitter, some users may retweet the message and others may not. But, among all users who have not retweeted the message, many of them may be aware of this message as their friends have retweeted it, while the rest is truly inactive. An example of such information propagation in a social network is shown in Figure~\ref{fig:social_graph}. If we take a close look at the process of information propagation in this example, we will find that a node may be informed of the information if at least one of its neighbours become active. We call such nodes as \textsl{informed nodes} in this paper. In contrast, a node may never know the information if none of its neighbours is active. In fact, there are a large number of informed nodes in many real-world social networks as we will show in our experiment later. Influence maximization only considers the active nodes and neglects the informed nodes, thus it can not model the true coverage of information propagation well. To better measure the coverage of information propagation, we should consider both active nodes and informed nodes.

To this end, we formulate a new problem called \textsl{Information Coverage Maximization} to address this issue. The objective of this problem is to maximize the expected number of both active nodes and informed nodes. We prove that the problem is NP-hard and submodular in the IC and LT model. We also show that computing exact information coverage in the IC model and LT model is \#P-hard. Then, we design two algorithms to solve the proposed problem. Finally, we evaluate the proposed algorithms with three real-world data sets. The experimental results demonstrate the performance of the proposed algorithms. Our contributions can be summarized as follows:
\begin{itemize}
\item We distinguish the informed node from the inactive node, and explore the value of informed nodes to better measure the coverage of information propagation. Thus, we propose a new problem of maximizing the expected number of both active nodes and informed nodes.
\item We prove that the proposed problem is NP-hard and the computation of information coverage is \#P-hard in the IC model and LT model. We also show that the objective function is submodular in the IC model and LT model.
\item We design two algorithms to solve the proposed problem. The proposed algorithms are examined with three real-world data sets and the experimental results show the performance of the proposed algorithms.
\end{itemize}

\textbf{Overview.} The rest of the paper is organized as follows. In Section $2$, we discuss related works. Section $3$ gives the definition of the problem and shows the properties of the problem. In Section $4$, we design three algorithms to solve the proposed problem. Section $5$ presents the experimental results. In Section $6$, we conclude our work.
\section{Related Work}
Social networks have been studied extensively for many years. A rich body of these studies is focused on the analysis of influence and information propagation in social networks. Several models have been proposed to describe the diffusion of information through the social network, such as IC model~\cite{goldenberg2001talk}, LT model~\cite{granovetter1978threshold} and decreasing cascade model~\cite{kempe2005influential}. These models define the stochastic process of information propagation. Thus they are called stochastic diffusion models~\cite{chen2013information}. There are also models which formulate the information propagation from other perspectives (~\cite{aggarwal2011flow,goyal2011data,ijcai13biaoxiang}~). Moreover, in~\cite{chen2012time} and~\cite{liu2012time}, the authors extended IC model to consider the time-delay aspect of influence diffusion.

Influence maximization~\cite{kempe2003maximizing}, which aims to maximize the expected number of active nodes in a given diffusion model, is another main research direction of the analysis of information propagation in social networks. In~\cite{kempe2003maximizing}, the authors proved the problem is NP-hard in both IC and LT models and proposed a greedy framework to solve it. The following researchers focused on developing both efficient and effective algorithms, such as CELF~\cite{leskovec2007cost}, PMIA~\cite{chen2010scalable}, LDAG~\cite{chen2010scalablelt}, SIMPATH~\cite{goyal2011simpath}, StaticGreedy~\cite{cheng2013staticgreedy}, Linear and Bound~\cite{Liu_2014_im} and IMRank~\cite{Cheng_imrank}. In addition, in~\cite{chen2012time} and~\cite{liu2012time}, the authors studied the influence maximization with time-critical constraint. In~\cite{tang2014diversified}, the authors studied the diversified influence maximization which considers both the magnitude of influence and the diversity of the influenced crowd. But influence maximization only considers the active nodes, which makes it different from the proposed problem.
\section{Problem Formulation}
In this section, we first give a formal definition of the information coverage maximization problem. Then we discuss the computational complexity of the proposed problem. Finally, we show some properties of the objective function.
\subsection{Problem Definition}
Let the directed graph$G=(V,E)$ denote an information propagation network, where $V=\{1,2,...n\}$ is the set of nodes and $E$ is the set of directed edges between nodes. A node in the graph corresponds to an individual in the social network and the directed edges represent the relationships between the individuals. In this paper, we use $n$ to denote the number of nodes and $m$ to denote the number of edges respectively.

Although there are quite a few diffusion models available to describe the process of information diffusion, we focus on the two most widely used models: IC model and LT model in this paper. In the IC model, there is a propagation probability matrix $P=[p_{i,j}]_{n*n}$ to denote the probability of node $i$ on activating node $j$. In the LT model, there is a propagation weight matrix $Q=[q_{i,j}]_{n*n}$ to denote the importance of node $i$ on activating node $j$.

In both IC model and LT model, seed nodes are the initial active nodes selected to propagate the information and they will try to activate their neighbours. Their neighbours will be informed of the information and may be activated. If a node is activated, it becomes an active node and will try to activate its own neighbours. If a node is not activated but receives the information, then it is an informed node. The process continues until no more nodes can be activated.

Let $S$, $A$, and $L$ denote the seed nodes, active nodes and informed nodes respectively. Then we get the relationships between them as follows:
\begin{equation}\label{eq:relation}
\begin{aligned}
A&=I(S) \\
L&=\bigcup_{a\in A}N(a)
\end{aligned}
\end{equation}
Where $I(S)$ is the set of final active nodes when the information diffusion process converges and $N(a)$ is the set of inactive out neighbours of node $a$.

Then, we can define the information coverage as follows:
\begin{definition} \label{def_ic}
\textbf{Information Coverage.} Given an information propagation network $G=(V,E)$, an information diffusion model on $G$, and a seed set $S$, the information coverage is the sum of expected number of active nodes and informed nodes.
\begin{equation}\label{eq:ic}
F(S)=E(|A|)+E(|L|) \\
\end{equation}
\end{definition}
Considering the relationship given by Eq.~(\ref{eq:relation}), we can rewrite Eq.~(\ref{eq:ic}) as follows:
\begin{equation}\label{eq:ic_rewrite}
F(S)=E(|I(S)|)+E(|\bigcup_{a\in I(S)}N(a)|) \\
\end{equation}
Now, we can give a formal definition of the information coverage maximization problem as follows:
\begin{definition} \label{def_icm}
\textbf{Information Coverage  Maximization.} Given an information propagation network $G=(V,E)$, an information diffusion model on $G$, and a budget number $k$, find a seed set $S$ with $|S|=k$ such that the information coverage $F(S)$ under the given diffusion model is maximized.
\begin{equation}\label{eq:icm}
S^*=\argmax_{|S|=k}~F(S) \\
\end{equation}
\end{definition}

Comparing the objective function $F(S)$ to the one of traditional influence maximization problem, we can see that the first term of $F(S)$ is exactly the influence spread~\cite{kempe2003maximizing}. The difference is that $F(S)$ contains the expected number of informed nodes, which makes it better model the true range of information propagation.

In the real world, the informed nodes may have different values than the active nodes. Therefore, we introduce a weight coefficient to control the relative values of the informed nodes. Then we can define the \textbf{Weighted Information Coverage} as follows:
\begin{definition}\label{def_wc}
\textbf{Weighted Information Coverage.} Given an information propagation network $G=(V,E)$, an information diffusion model on $G$, and a seed set $S$, the weighted information coverage is the weighted sum of expected number of active nodes and informed nodes.
\begin{equation}\label{eq:wc}
\begin{aligned}
W(S)&=E(|A|)+\lambda~E(|L|) \\
s.t.\quad& \lambda \in [0,1]
\end{aligned}
\end{equation}
\end{definition}
 The weight coefficient $\lambda$ controls the importance of informed nodes. When $\lambda$ equals to $1$, $W(S)$ equals to the information coverage $F(S)$. When $\lambda$ equals to $0$, $W(S)$ is the same as the influence spread. Thus, both the information coverage and the influence spread are special cases of the weighted information coverage. To this end, we can define a general form of information coverage maximization problem as follows:
\begin{definition} \label{def_wcm}
\textbf{Weighted Information Coverage Maximization.} Given an information propagation network $G=(V,E)$, an information diffusion model on $G$, and a budget number $k$, find a seed set $S$ with $|S|=k$ such that the weighted information coverage $W(S)$ under the given diffusion model is maximized.
\begin{equation}\label{eq:wc}
S^*=\argmax_{|S|=k}~W(S)
\end{equation}
\end{definition}

 \subsection{Computational Complexity}
In this part, we discuss the computational complexity of the proposed problems in IC model and LT model respectively.
\begin{theorem}
Both the information coverage maximization problem and the weighted information coverage maximization problem are NP-hard in the IC model.
\end{theorem}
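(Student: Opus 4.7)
The plan is to reduce from the classical NP-complete \textsc{Set Cover} problem. Given a \textsc{Set Cover} instance with universe $U=\{u_1,\ldots,u_n\}$, collection $\mathcal{C}=\{S_1,\ldots,S_m\}$, and budget $k$, I would construct in polynomial time a directed graph $G=(V,E)$ whose nodes are one copy per element of $U$ together with one copy per set in $\mathcal{C}$, and whose edges consist of a directed arc from the set-node $S_j$ to the element-node $u_i$ with propagation probability $p_{S_j,u_i}=1$ whenever $u_i\in S_j$. After the obvious WLOG preprocessing (every element lies in some set, and $k\le m$), this yields a deterministic IC instance.

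The key structural fact I would verify next is that assigning every edge propagation probability $1$ forces $L=\emptyset$ on every realization. Indeed, whenever an active node attempts to activate an out-neighbor the attempt succeeds, so no out-neighbor of an active node can remain inactive, and by the definition $L=\bigcup_{a\in A}N(a)$ we obtain $L=\emptyset$. Therefore, for every seed set $S$,
\[
F(S)=E(|A|)=|A|, \qquad W(S)=E(|A|)+\lambda\cdot 0 = |A|,
\]
so the same reduction simultaneously covers the unweighted and weighted objectives for every $\lambda\in[0,1]$, and the diffusion is fully deterministic.

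To close the reduction, I would exploit the fact that set-nodes have no incoming edges in the constructed graph, so the only active set-nodes are those picked as seeds. Splitting any seed set as $S=S_{\mathcal{C}}\cup S_U$ into set-nodes and element-nodes, the active element-nodes are exactly $S_U\cup\bigcup_{S_j\in S_{\mathcal{C}}}S_j$, giving
\[
|A|=|S_{\mathcal{C}}|+\Bigl|S_U\cup\bigcup_{S_j\in S_{\mathcal{C}}}S_j\Bigr|\le (k-|S_U|)+n.
\]
Hence $F(S)\ge n+k$ forces $S_U=\emptyset$ and $\bigcup_{S_j\in S}S_j=U$, i.e., $S$ is a \textsc{Set Cover} of size $k$; conversely, any size-$k$ cover produces a seed set with $F(S)=n+k$. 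A polynomial-time algorithm for (weighted) information coverage maximization would therefore decide \textsc{Set Cover}, establishing NP-hardness of both problems in the IC model.

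The main thing to handle carefully is the bookkeeping in this last step, namely the observation that set-nodes have no in-edges so only seeded set-nodes ever become active: this tightness is what converts the $F(S)\ge n+k$ bound into a genuine ``iff''. Everything else is a direct consequence of the deterministic collapse $E(|L|)=0$ induced by probability-one edges, so I do not anticipate any deep obstacle beyond this clean-cut case analysis.
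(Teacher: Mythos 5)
Your proof is correct, but it takes a genuinely different route from the paper's. Both reductions start from \textsc{Set Cover} on the same bipartite skeleton (set-nodes pointing to element-nodes), but the paper sets every propagation probability to $0$, so that no element-node ever becomes \emph{active} and the entire count of covered elements is carried by the \emph{informed} set $L$; the target value is $F(N)=n+k$ ($k$ active seeds plus $n$ informed elements), and the weighted case becomes $W(N)=\lambda n+k$. You instead set every probability to $1$, which forces $L=\emptyset$ and collapses $F$ and $W$ to the deterministic influence spread $|A|$ --- essentially re-running the Kempe--Kleinberg--Tardos reduction for influence maximization inside the new objective. Each choice buys something: the paper's construction isolates the new informed-node term and shows that this term \emph{alone} already makes the problem hard, which is the more informative statement about the novel part of the objective; your construction is uniform in $\lambda$ and in particular still works at $\lambda=0$, where the paper's weighted reduction degenerates (with all probabilities $0$, every size-$k$ seed set has $W(N)=k$, so that instance distinguishes nothing --- the $\lambda=0$ case must then fall back on the known hardness of influence maximization, which is exactly your argument). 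Your bookkeeping step --- set-nodes have no in-edges, so splitting $S=S_{\mathcal{C}}\cup S_U$ gives $|A|\le (k-|S_U|)+n$ with equality iff $S_U=\emptyset$ and the seeded sets cover $U$ --- is sound and is actually spelled out more carefully than the corresponding step in the paper.
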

\begin{proof}
We reduce from the set cover problem~\cite{karp1972reducibility} to prove this theorem. The definition of the set cover problem is: given a collection of subsets $S_1, S_2,..., S_m$ of a ground set $U = \{u_1, u_2,..., u_n\}$, the question is if there exist $k$ of the subsets whose union is $U$.

Given an arbitrary instance of the set cover problem, we construct a corresponding directed bipartite graph: there is a node $i$ for each subset $S_i$, a node $j$ for each element $u_j$, and a directed edge$(i,j)$ with a propagation probability $p_{i,j}=0$ when $u_j\in S_i$. Since all probabilities are $0$, the information propagation is a deterministic process in this case. Thus, the set cover problem is equivalent to deciding if there is a set $N$ of $k$ nodes in the graph with $F(N)=n+k$. If any set $N$ of $k$ nodes has $F(N)=n+k$, then we can initially activate the $k$ nodes corresponding to subsets such that all $n$ nodes corresponding to elements in the ground set will be informed. This means that the set cover problem must be solvable. For the weighted case, the set cover problem is equivalent to deciding if there is a set $N$ of $k$ nodes in the graph with $W(N)=\lambda~n+k$.
\end{proof}

\begin{theorem}
Given a seed set $S$, computing the information coverage $F(S)$ or the weighted information coverage $W(S)$ is \#P-hard in the IC model.
\end{theorem}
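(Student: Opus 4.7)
The plan is to reduce from the problem of computing the influence spread $\sigma(S)=E(|A|)$ in the IC model, which is known to be \#P-hard~\cite{chen2010scalable}. Since $F(S)=\sigma(S)+E(|L|)$, the extra informed-node term could in principle make the problem either harder or easier than plain influence spread, so I need a gadget that cleanly isolates $\sigma(S)$ from the contribution of $E(|L|)$.

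The construction I would use is the following. Given an arbitrary instance $(G=(V,E),S)$ on which we want to evaluate $\sigma$, build a larger graph $G'$ by adjoining to every $v\in V$ a fresh private leaf $x_v$ together with a single in-edge $v\to x_v$ of propagation probability $0$. Because each leaf carries only a zero-probability in-edge and no out-edges, no leaf is ever activated, and no original node acquires a new active in-neighbour; consequently the active set in $G'$ seeded by $S$ has the same distribution as the active set in $G$, and an original node is informed in $G'$ if and only if it is informed in $G$. The only change is that whenever $v\in V$ becomes active it ``tries'' to activate $x_v$ and thereby informs it, which is exactly the semantics the preceding NP-hardness proof already relies on.

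This yields $F_{G'}(S)=E(|A_G|)+E(|L_G|)+E(|A_G|)=2\sigma(S)+E(|L_G|)$ and $F_G(S)=\sigma(S)+E(|L_G|)$, hence $\sigma(S)=F_{G'}(S)-F_G(S)$. Two oracle calls to $F$ together with one polynomial-time graph transformation therefore recover $\sigma(S)$, establishing \#P-hardness of $F$. The same gadget handles the weighted case: for $\lambda\in(0,1]$ we get $W_{G'}(S)-W_G(S)=\lambda\sigma(S)$, and for $\lambda=0$ the function $W$ already coincides with $\sigma$ and is \#P-hard by the cited result.

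The main obstacle I anticipate is not arithmetic but the semantic check that a probability-$0$ edge really does create an informed neighbour. I would justify this by pointing back to the paper's own definition (an informed node is an inactive node whose neighbour ``become[s] active and tr[ies] to activate it'') and to the preceding NP-hardness reduction, which depends on precisely this convention. If one prefers to avoid zero probabilities entirely, a clean fallback is to replace $0$ by a small positive $\varepsilon$ on the leaf edges and interpolate out the resulting polynomial in $\varepsilon$, or to imitate the original reduction from counting $s$-$t$ connectedness in a probabilistic directed graph directly.
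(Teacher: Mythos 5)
Your proof is correct, but it takes a different route from the paper's. The paper reduces directly from the $s$-$t$ connectedness counting problem: it sets $S=\{s\}$, puts probability $\tfrac12$ on every edge, adjoins a \emph{single} extra node $t'$ with an edge $t\to t'$ of probability $1$, and observes that $F_{G'}(S)-F_G(S)=p_G(S,t)\bigl(p_{t,t'}+1-p_{t,t'}\bigr)=p_G(S,t)$, which is exactly the quantity that is \#P-hard to compute. You instead treat influence-spread computation as a black-box \#P-hard problem and attach a zero-probability private leaf to \emph{every} node, so that the leaves contribute $E(|A_G|)$ informed nodes and $\sigma(S)=F_{G'}(S)-F_G(S)$ (or $\lambda\sigma(S)$ in the weighted case). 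Both are valid Turing reductions using two oracle calls; interestingly, your gadget is essentially the one the paper itself uses for the LT-model version of this theorem (Theorem 3.4) and for the general $\lambda$-as-input result (Theorem 3.5), so your argument is the more model-agnostic of the two --- it works in any diffusion model where influence spread is \#P-hard --- while the paper's IC argument is more self-contained in that it exhibits the hard counting quantity explicitly. Your caution about whether a probability-$0$ edge informs its endpoint is well placed, but it is settled by the paper's definition $L=\bigcup_{a\in A}N(a)$ with $N(a)$ the inactive out-neighbours of $a$: informedness is purely structural and independent of edge probabilities, so the $\varepsilon$-interpolation fallback is unnecessary. One small point to state explicitly if you write this up: the leaves have no out-edges and are never activated, so they neither change the activation dynamics nor the informed status of any original node, which is what licenses the identity $F_{G'}(S)=2\sigma(S)+E(|L_G|)$.
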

\begin{proof}
We reduce from the $s-t$ connectedness problem~\cite{valiant1979complexity} to prove the theorem. The definition of the $s-t$ connectedness problem is: given a directed graph $G=(V,E)$ and two nodes $s$ and $t$ in the graph, the question is to count the number of subgraphs of $G$ in which $s$ is connected to $t$. In~\cite{chen2010scalable}, the authors show that this problem is equivalent to computing the probability that $s$ is connected to $t$ when each edge in $G$ is connected with a probability of $\frac{1}{2}$.

Given an arbitrary instance of the $s-t$ connectedness problem, let $W_G(S)$ and $F_G(S)$ denote the (weighted) information coverage of seed set $S$ in graph $G$ respectively. Then let $S=\{s\}$ and $p(e)=\frac{1}{2}$ for all $e \in E$, and compute $I_1=F_G(S)$. Next, add a new node $t'$ and a directed edge from $t$ to $t'$ with a propagation probability $p_{t,t'}=1$. Now we obtain a new graph $G'$ and compute $I_2=F_{G'}(S)$. Let $p_G(S,t)$ denote the probability that node $t$ is activated by $S$. Since graph $G'$ only has an extra node $t'$, it is easy to see that $I_2=F_G(S)+p_G(S,t)(p_{t,t'}+1-p_{t,t'})$. Thus, $I_2-I_1$ is the probability that $s$ is connected to $t$. This means that $s-t$ connectedness problem must be solvable. For the $W(S)$ case, $I_1=W_G(S)$ and $I_2=W_G(S)+p_G(S,t)(p_{t,t'}+\lambda(1-p_{t,t'}))$.
\end{proof}

\begin{theorem}
Both the information coverage maximization problem and the weighted information coverage maximization problem are NP-hard in the LT model.
\end{theorem}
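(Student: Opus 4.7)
The plan is to mimic the IC-model argument by reducing from Set Cover, with only the propagation parameters reinterpreted in the LT model. Starting from an arbitrary Set Cover instance with ground set $U=\{u_1,\dots,u_n\}$ and subsets $S_1,\dots,S_m$, I would construct the same directed bipartite graph used in the IC proof: a subset-node for each $S_i$, an element-node for each $u_j$, and a directed edge $(i,j)$ whenever $u_j\in S_i$. Instead of setting $p_{i,j}=0$, I would assign LT influence weights $q_{i,j}=0$ on every edge. This is a feasible LT instance because every column sum is trivially at most one.

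The next step is to argue that this choice makes the diffusion essentially trivial. Since the incoming weight at every non-seed node is $0$ and the threshold $\theta_v$ is drawn uniformly from $[0,1]$, the activation inequality $\sum q_{u,v}\ge\theta_v$ fails with probability one, so no non-seed node is ever activated and $E(|A|)=k$. By the paper's own definition, an active seed still tries to activate each of its out-neighbours (informedness is tied to having an active in-neighbour, not to the magnitude of the weight), so each element-node adjacent to some selected subset-node is informed. Hence $E(|L|)$ equals the number of elements covered by the subsets that correspond to the chosen seeds.

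Plugging these into the objective functions, for a seed set $N$ with $|N|=k$ we get $F(N)=k+|\bigcup_{i\in N}S_i|$, so $F(N)=n+k$ iff $N$ indexes a set cover of size $k$; and $W(N)=k+\lambda\,|\bigcup_{i\in N}S_i|$, so $W(N)=k+\lambda n$ iff $N$ indexes a set cover. Each equivalence is a polynomial-time reduction from Set Cover, establishing NP-hardness of the information coverage maximization problem and, analogously, of the weighted version.

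The only real obstacle is conceptual rather than technical: one must be comfortable that an LT instance with all weights zero is still well-defined, and that the measure-zero event $\theta_v=0$ does not disturb the expectations $E(|A|)$ and $E(|L|)$ computed above. If one prefers to avoid appealing to this measure-zero observation, the cleaner alternative is to use a tiny positive weight $q_{i,j}=\varepsilon$ and pad each element-node's in-degree with dummy inactive in-neighbours so that no element-node can ever meet its threshold from the active seeds alone; the analysis then proceeds verbatim. For the two-layer bipartite structure at hand, though, the zero-weight construction is already rigorous and matches the IC proof as closely as possible.
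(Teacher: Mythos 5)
Your reduction is correct, but it takes a different route from the paper. You reduce from Set Cover, reusing the bipartite gadget from the paper's IC-model proof with all LT weights set to $0$, so that (with probability one) no non-seed node ever activates and the objective degenerates to $k$ plus the number of covered elements; the hardness then rests entirely on the ``informed'' mechanism and not on the LT activation dynamics at all. The paper instead reduces from Vertex Cover: it keeps the original graph, assigns weight $1/\mathrm{degree}(v)$ to each edge into $v$, and attaches to every vertex $v$ a pendant node $v'$ via a zero-weight edge; a size-$k$ seed set achieves $F(N)=2n$ exactly when it is a vertex cover, because a non-seed vertex activates deterministically precisely when all of its neighbours are active (its incoming active weights then sum to $1$), and the pendants supply the informed count. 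Your version is arguably simpler and unifies the IC and LT arguments under one gadget, while the paper's version actually exercises the LT threshold dynamics in the spirit of the original Kempe--Kleinberg--Tardos LT hardness proof. Two small points: you are right that the only delicate issue is the measure-zero event $\theta_v=0$, and your handling of it is adequate; however, your proposed ``cleaner alternative'' with weights $q_{i,j}=\varepsilon$ does not work as stated, since in the random-threshold LT model any strictly positive incoming active weight gives a strictly positive activation probability no matter how many dummy in-neighbours you add, which would perturb $W(N)$ away from $k+\lambda\,\lvert\bigcup_{i\in N}S_i\rvert$. Stick with the zero-weight construction, and (as in the paper's IC proof) note explicitly that $F(N)=n+k$ forces all $k$ seeds to be subset-nodes.
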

\begin{proof}
We reduce from the vertex cover problem~\cite{karp1972reducibility} to prove this theorem. The definition of the vertex cover problem is: given a graph $G=(V,E)$ and a positive integer $k$, the question is if there is vertex set of size $k$ such that there is at least one endpoint in this set for each edge in the graph.

Given an arbitrary instance of the vertex cover problem, we construct a new graph $G'$ like this: First, for each edge $(u,v)$ in graph $G$, we associate it with a propagation weight $q_(u,v)=1/degree(v)$. Second, for each vertex $v$ in graph $G$, we add a new vertex $v'$ and a directed edge from $v$ to $v'$ with a propagation weight $q_{v,v'}=0$. Then the vertex cover problem is equivalent to deciding if there is a node set $N$ of size $k$ such that $F(N)=2n$ (assuming the number of vertices in graph $G$ is $n$). if there is any node set $N$ of size $k$ has $F(N)=2n$, then the node set $N$ is a vertex cover of size $k$ of the graph $G$. This means that the vertex cover problem is solvable. For the weighted case, the vertex cover problem is equivalent to deciding if there is a node set $N$ of size $k$ such that $W(N)=(1+\lambda)n$.
\end{proof}
\begin{theorem}
Given a seed set $S$, computing the information coverage $F(S)$ or the weighted information coverage $W(S)$ is \#P-hard in the LT model.
\end{theorem}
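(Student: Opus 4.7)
The plan is to mirror the strategy of the preceding IC-model theorem: adjoin a pendant node to each vertex and recover individual activation probabilities from differences of information-coverage values. The source \#P-hard problem will be computing the expected influence spread $E(|I(S)|)$ in the LT model, which was shown to be \#P-hard in~\cite{chen2010scalablelt}. If $F(S)$ were polynomial-time computable in the LT model, a polynomial number of such queries would recover $E(|I(S)|)$, yielding the contradiction.

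Concretely, given an LT instance $(G,S)$, for each vertex $v\in V$ I would construct an auxiliary graph $G_v$ by adding a fresh sink $v'$ and a single directed edge $(v,v')$ with propagation weight $q_{v,v'}=1$, leaving everything else untouched. The key observation is that the LT dynamics on the original vertices are unaffected by appending the pendant (activation of any original node depends only on its in-neighbors and its own threshold, all unchanged), so the only new contribution to $F$ comes from $v'$ itself. Since $q_{v,v'}=1$, a threshold drawn uniformly from $[0,1]$ is satisfied with probability one whenever $v$ activates, so $v'$ becomes active exactly when $v$ is active and is never merely informed. This yields the identity
\begin{equation*}
F_{G_v}(S)-F_G(S)=p_G(S,v),
\end{equation*}
where $p_G(S,v)$ denotes the probability that $v$ is activated by $S$. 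Summing over $v\in V$ gives $E(|I(S)|)=\sum_{v\in V}\bigl(F_{G_v}(S)-F_G(S)\bigr)$, completing the reduction.

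For the weighted information coverage $W(S)$, the construction is identical: since $v'$ contributes only to the active count and never to the informed count, we obtain $W_{G_v}(S)-W_G(S)=p_G(S,v)$ independently of $\lambda$, and the same summation recovers $E(|I(S)|)$. The main subtlety I anticipate is confirming that appending the pendants does not perturb the LT process on the original graph in any hidden way; in particular I would verify that the joint threshold distribution on original vertices in $G_v$ restricts exactly to that of $G$, and I would treat the measure-zero boundary event $\theta_{v'}>1$ carefully so that the displayed identity holds on the nose rather than merely almost surely.
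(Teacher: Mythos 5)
Your proof is correct, and it is a legitimate variant of the paper's argument rather than a copy of it. Both you and the paper reduce from LT influence-spread computation (citing the same \#P-hardness result of Chen et al.) via pendant-node gadgets, but the gadgets and the query structure differ. The paper attaches a weight-$0$ pendant to \emph{every} node simultaneously, producing a single auxiliary graph in which each pendant is \emph{informed} (never active) exactly when its parent is active; this gives $W'(S)=x+\lambda(x+y)$ and recovers the spread $x=(W'(S)-W(S))/\lambda$ from just two oracle calls. You instead attach a weight-$1$ pendant to \emph{one} node at a time, producing $n$ auxiliary graphs in which the pendant is \emph{active} (never merely informed) exactly when its parent is, so each difference $F_{G_v}(S)-F_G(S)=p_G(S,v)$ isolates a single activation probability and the spread is recovered by summation over $n+1$ oracle calls. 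Your version is essentially the paper's IC-model proof transplanted to LT, and it has two small advantages: it works uniformly for all $\lambda\in[0,1]$ (the paper's formula divides by $\lambda$ and so degenerates at $\lambda=0$, where $W$ is the influence spread outright), and it extracts the finer-grained quantities $p_G(S,v)$. The paper's version is more economical in the number of oracle queries. Your stated cautions — that the pendant cannot perturb the dynamics on the original vertices since it has no out-edges, that the weight-$1$ in-edge respects the LT normalization $\sum_u q_{u,v'}\le 1$, and that the threshold boundary event is handled by the convention that a node activates when the active in-weight reaches its threshold — are exactly the right points to check, and all hold.
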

\begin{proof}
We reduce from the influence spread computation problem to prove the theorem. In ~\cite{chen2010scalablelt}, the authors proved computing influence spread in the LT model is \#P-hard.

Given an arbitrary instance of the influence spread computation problem, let $x$ and $y$ denote the expected number of active nodes and informed nodes respectively. $x$ is exactly the influence spread in the graph and the weighted information coverage is $W(S)=x+\lambda*y$. Then for each node $v$ in the graph, we add a new node $v'$ and a directed edge from $v$ to $v'$ with a propagation weight $q_(v,v')=0$. Now we obtain a new graph $G'$. Since the propagation weight of new edge is $0$, the expected number of active nodes in the new graph is still $x$. Thus the weighted information coverage in the new graph is $W'(S)=x+\lambda*(x+y)$. Now, we can get $x=\frac{W'(S)-W(S)}{\lambda}$. This means that the influence spread computation problem is solvable. For the F(S) case, we can get $x=W'(S)-W(S)$.
\end{proof}
In the above proof, we assumed that $\lambda$ is a predefined constant. If we view $\lambda$ as an input of the weighted information coverage $W(S)$, we will have a stronger result.
\begin{theorem}
If $\lambda$ is an input of the weighted information coverage $W(S)$, computing $W(S)$ is \#P-hard whenever the computation of influence spread is \#P-hard.
\end{theorem}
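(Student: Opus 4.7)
The plan is to give a one-step reduction: make the oracle for $W(S)$ compute influence spread directly by choosing the input $\lambda$ appropriately. Concretely, since the definition of weighted information coverage permits $\lambda \in [0,1]$, I would feed the algorithm the value $\lambda = 0$ on the same graph, seed set $S$, and diffusion model as in the given instance of the influence spread computation problem.

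At $\lambda = 0$, the second term in $W(S) = E(|A|) + \lambda\, E(|L|)$ vanishes, so $W(S) = E(|A|)$, which is exactly the influence spread by Definition~\ref{def_ic} and the discussion immediately following it (the first term of $F(S)$ and hence of $W(S)$ is the influence spread). Therefore, any polynomial-time algorithm for $W(S)$ that takes $\lambda$ as input would yield a polynomial-time algorithm for influence spread computation by simply invoking it with $\lambda = 0$.

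Taking the contrapositive, whenever influence spread computation is \#P-hard (as proved, for instance, in \cite{chen2010scalablelt} for the LT model and earlier for the IC model), computing $W(S)$ with $\lambda$ as part of the input must also be \#P-hard. This gives a stronger conclusion than the preceding theorem because the reduction is model-agnostic: it does not rely on any specific structural trick (such as adding zero-weight dummy nodes $v'$ and inverting a linear system in $\lambda$), but only on the definition of $W(S)$ at the boundary value $\lambda = 0$.

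There is essentially no obstacle to this argument; the only thing to verify is that $\lambda = 0$ is admissible under Definition~\ref{def_wc}, which it is. The mild subtlety worth flagging in the write-up is that this reduction treats $\lambda$ as genuinely part of the input encoding (so the algorithm's running time must be polynomial in the input size including the encoding of $\lambda$), in contrast to the previous theorem, where $\lambda$ was fixed and the reduction had to manipulate the underlying graph to separate the contributions of active and informed nodes.
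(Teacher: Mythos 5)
Your proof is correct, and it is a genuinely simpler reduction than the one in the paper. The paper also exploits the fact that $\lambda$ is part of the input, but it makes \emph{two} oracle calls at distinct values $\lambda_1 \neq \lambda_2$, obtaining $W_1 = x + \lambda_1 y$ and $W_2 = x + \lambda_2 y$, and then solves this $2\times 2$ linear system for the influence spread $x$ (and incidentally for $y$ as well). You instead make a single call at the boundary value $\lambda = 0$, where $W(S)$ collapses to the influence spread outright; this is admissible since Definition~\ref{def_wc} allows $\lambda \in [0,1]$, and the paper itself observes that $W(S)$ at $\lambda=0$ coincides with influence spread. What your version buys is economy: one query, no linear algebra, and no graph modification. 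What the paper's version buys is a little robustness: it would still go through if the admissible range of $\lambda$ excluded $0$ (any two distinct admissible values suffice), and it additionally recovers the expected number of informed nodes $y$, though neither of these is needed for the stated theorem. Your closing remark correctly identifies the essential hypothesis --- that $\lambda$ is part of the input rather than a fixed constant --- which is exactly what distinguishes this theorem from the preceding one, where the reduction must instead manipulate the graph.
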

\begin{proof}
Given an information propagation network $G$, and a diffusion model on $G$, let let $x$ and $y$ denote the expected number of active nodes and informed nodes respectively. $x$ is exactly the influence spread in the graph and the weighted information coverage is $W(S)=x+\lambda*y$. Since $\lambda$ is an input of the weighted information coverage $W(S)$, we can change the value of $lambda$ and compute the $W(S)$ multiple times. For example, we can get $W_1=x+\lambda_1*y$ and $W_2=x+\lambda_2*y$. Then we can solve $x$ from the two equations. It follows the result of the theorem.
\end{proof}
\subsection{The Properties of Objective Functions}\label{subsec:func_pro}
In this part, we show that the objective functions $F(\cdot)$ and $W(\cdot)$ have the following properties:
\begin{itemize}
\item $F(\emptyset)=0$ and $W(\emptyset)=0$.
\item Both $F(\cdot)$ and $W(\cdot)$ are monotone.
\item Both $F(\cdot)$ and $W(\cdot)$ are submodular.
\end{itemize}
Since the first two properties are straightforward, we focus on proving the third one.
\begin{theorem}
Both $F(\cdot)$ and $W(\cdot)$ are submodular in the IC model and LT model.
\end{theorem}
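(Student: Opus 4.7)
The plan is to prove submodularity for both models simultaneously by passing to the equivalent \emph{live-edge} (possible world) formulation of Kempe, Kleinberg and Tardos, and then showing that in each possible world the objective is a coverage function.

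First I would fix a possible world $X$. For the IC model this means independently declaring each edge $(u,v)\in E$ to be ``live'' with probability $p_{u,v}$; for the LT model it means letting each node $v$ pick at most one incoming live edge, choosing edge $(u,v)$ with probability $q_{u,v}$ and picking none with probability $1-\sum_u q_{u,v}$. In both cases the classical equivalence tells us that, conditional on $X$, the set $A_X(S)$ of finally active nodes starting from seed set $S$ is exactly $\bigcup_{v\in S}R_X(v)$, where $R_X(v)$ is the set of nodes reachable from $v$ in the live-edge graph. Since $L_X(S)$ consists of all inactive nodes that possess an in-neighbour (in the original graph $G$) belonging to $A_X(S)$, and since $A_X(S)$ and $L_X(S)$ are by construction disjoint, we have
\begin{equation*}
g_X(S)\ :=\ |A_X(S)|+|L_X(S)|\ =\ \bigl|A_X(S)\cup N^{+}(A_X(S))\bigr|,
\end{equation*}
where $N^{+}(T)=\{v:\exists u\in T,(u,v)\in E\}$ is the out-neighbourhood in $G$.

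The key step is the following factorisation. Define $T_X(v):=R_X(v)\cup N^{+}(R_X(v))$. Using $A_X(S)=\bigcup_{v\in S}R_X(v)$ and the fact that $N^{+}$ commutes with unions, I get
\begin{equation*}
g_X(S)\ =\ \Bigl|\bigcup_{v\in S}\bigl(R_X(v)\cup N^{+}(R_X(v))\bigr)\Bigr|\ =\ \Bigl|\bigcup_{v\in S}T_X(v)\Bigr|.
\end{equation*}
This exhibits $g_X$ as a coverage function on the ground set $V$, which is a textbook submodular function. Averaging over possible worlds, $F(S)=\mathbb{E}_X[g_X(S)]$ is a non-negative combination of submodular functions and hence submodular; monotonicity and $F(\emptyset)=0$ follow from the same decomposition.

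For the weighted version I would rewrite, inside each fixed world,
\begin{equation*}
W_X(S)\ =\ |A_X(S)|+\lambda|L_X(S)|\ =\ (1-\lambda)\,|A_X(S)|+\lambda\,g_X(S),
\end{equation*}
and note that $|A_X(S)|=|\bigcup_{v\in S}R_X(v)|$ is itself a coverage function, so both terms are submodular. Because $\lambda\in[0,1]$ both coefficients are non-negative, and taking expectation over $X$ shows $W$ is submodular as well. The main subtlety, and the step I would scrutinise most carefully, is verifying that the live-edge equivalence $A_X(S)=\bigcup_{v\in S}R_X(v)$ is valid for the LT model (it is, by the KKT coupling argument), because without it the coverage decomposition of $g_X$ breaks down; once that is in place the rest is essentially a one-line reduction to submodularity of coverage functions.
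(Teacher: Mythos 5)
Your proof is correct, and it uses the same outer framework as the paper (pass to the live-edge/possible-world model, prove submodularity per world, then take a non-negative expectation), but the key per-world step is genuinely different. The paper defines $C_{G_L}(S)=|R_{G_L}(S)|+\lambda|U_{G_L}(S)|$ and argues submodularity by writing out the marginal gains $C_{G_L}(M\cup\{v\})-C_{G_L}(M)$ and $C_{G_L}(N\cup\{v\})-C_{G_L}(N)$ explicitly as inclusion--exclusion expressions and comparing them term by term. You instead observe that $A_X(S)$ and $L_X(S)$ are disjoint, so $|A_X(S)|+|L_X(S)|=\bigl|\bigcup_{v\in S}T_X(v)\bigr|$ with $T_X(v)=R_X(v)\cup N^{+}(R_X(v))$, i.e.\ the per-world objective is a coverage function; the weighted case follows from the convex-combination identity $W_X=(1-\lambda)|A_X|+\lambda g_X$ with both terms coverage functions. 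Your route buys two things: it sidesteps the delicate bookkeeping of which informed nodes become active when the seed set grows (the paper's displayed marginal-gain formula in fact omits the correction term for $U_{G_L}(v)\cap R_{G_L}(M)$, nodes informed by $v$ but already active under $M$, so your version is the more airtight of the two), and it yields monotonicity and $F(\emptyset)=0$ for free from the same decomposition. The paper's direct computation, when done carefully, is more elementary in that it does not require spotting the factorisation $N^{+}(\bigcup_v R_X(v))=\bigcup_v N^{+}(R_X(v))$, but it is easier to get wrong. Your flagged subtlety (validity of the live-edge coupling for LT) is exactly the right thing to cite KKT for, and the paper relies on the same fact implicitly when it says ``the following proof is the same for the two models.''
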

\begin{proof}
We utilize the live-arc graph model~\cite{kempe2003maximizing} to prove the theorem. Given an information propagation graph $G$, we construct the live-arc graphs for the IC model and LT model respectively. Then the following proof is the same for the two models. Let $G_{L}$ denote a random live-arc graph, and let $Prob(G_{L})$ denote the probability that $G_{L}$ is selected from all possible live-arc graphs. Let $R_{G_{L}}(S)$ denote the set of all nodes that can be reached from $S$ in $G_{L}$. Then $R_{G_{L}}(S)$ is exactly the active nodes when $S$ is the seed nodes. Next, let $U_{G_{L}}(S)$ denote the union of the inactive out neighbours of the active nodes. Now, for both the IC model and LT model, we have
\begin{equation}\label{eq:live_arc}
\begin{aligned}
&C_{G_L}(S)=|R_{G_{L}}(S)|+\lambda*|U_{G_L}(S)| \\
&W(S)=\sum_{all ~possible ~G_{L}}{Prob(G_{L})C_{G_L}(S)}
\end{aligned}
\end{equation}
Since an non-negative linear combination of submodular functions is also submodular, we only need to prove $C_{G_{L}}(\cdot)$ is submodular for any live-arc graph $G_L$. To do this, Let $M$ and $N$ be two sets of nodes such that $M \subseteq N \subseteq V$ and $v\in V\setminus~N$. Then we have
\begin{equation}\label{eq:submodular}
\begin{aligned}
C_{G_{L}}(M\cup {v})-C_{G_{L}}(M)=~~~~&|R_{G_{L}}(v)|+\lambda*|U_{G_L(v)}|-|R_{G_{L}}(v)\cap R_{G_{L}}(M)|\\
-&\lambda*|R_{G_{L}}(v)\cap U_{G_{L}}(M)|-\lambda*|U_{G_L(v)}\cap U_{G_L(M)}|
\end{aligned}
\end{equation}
\begin{equation}\label{eq:submodular2}
\begin{aligned}
C_{G_{L}}(N\cup {v})-C_{G_{L}}(N)=~~~~&|R_{G_{L}}(v)|+\lambda*|U_{G_L(v)}|-|R_{G_{L}}(v)\cap R_{G_{L}}(N)|\\
-&\lambda*|R_{G_{L}}(v)\cap U_{G_{L}}(N)|-\lambda*|U_{G_L(v)}\cap U_{G_L(N)}|
\end{aligned}
\end{equation}
Since we have $M \subseteq N$, then we can get $C_{G_{L}}(M\cup {v})-C_{G_{L}}(M) \geq C_{G_{L}}(N\cup {v})-C_{G_{L}}(N)$. It follows that $C_{G_L}(\cdot)$ is submodular. Thus $W(\cdot)$ is submodular. For the $F(\cdot)$ case, let $\lambda=1$ and the result still holds.
\end{proof}
\section{Solutions}
We have shown the computational complexity of the proposed problems in the previous section. Thus we can not find the optimal solution or compute the exact information coverage in polynomial time under the assumption $P \neq NP$. In this section, we discuss an approximation algorithm and two heuristic algorithms.
\subsection{Greedy Algorithm with Lazy Evaluation Optimization}
In Section~\ref{subsec:func_pro}, we show that $F(\cdot)$ and $W(\cdot)$ have three properties. Based on these properties, we can design a simple greedy strategy: add the node that provides the largest marginal contribution to the objective function in each iteration. According to~\cite{nemhauser1978analysis}, the greedy strategy can approximate the optimal solution with a factor of $1-\frac{1}{e}$. However, the greedy strategy relies on the exact computation of the objective function. In our case, computing the objective function is \#P-hard. Thus we need to use Monte Carlo simulation method to estimate the objective function. Then as shown in~\cite{chen2013information}, the greedy strategy with Monte Carlo simulation has an approximation ratio of $1-\frac{1}{e}-\epsilon$, where $\epsilon$ is a constant number dependent on the accuracy of the Monte Carlo simulation. In order to get a good approximation, we have to run  Monte Carlo simulations for sufficiently many times (e.g., 10,000). Consequently, the greedy strategy is very time-consuming. Due to the submodularity of the objective function, we adopt a optimization trick called lazy evaluation~\cite{minoux1978accelerated} to speed up the greedy strategy. Let $\Delta_M(v)=F(M \cup v)-F(M)$ denote the marginal gain after adding $v$ to $M$. Then for $M \subseteq N \subseteq V$, we have $\Delta_M(v) \geq \Delta_N(v)$. Thus we can use the marginal gain computed in the previous iteration as a upper bound of the current iteration. We only update the marginal gain when necessary. In this way, the lazy forward update scheme can effectively reduce the number of the objective function evaluations. More details about the update scheme are shown in Algorithm~\ref{alg_lfg}. From the algorithm, we can see that it needs $(n+k\beta)$ times of objective function evaluations, where $\beta \ll n$ is the expected number of objective function evaluations in each iteration. Thus the average time complexity is $O(nRm+k\beta Rm)$, where $R$ is the number of rounds of simulations in each estimation.

\begin{algorithm}[h]
\SetAlgoNoLine
\caption{The Lazy-Forward Greedy Algorithm} \label{alg_lfg}
\KwIn{$G=(V,E,T)$, number $k$}
\KwOut{seed set $S$}
initialize $S=\emptyset$ \\
\For {each node $n$ in $V$}
{
    //for the weighted case, replace $F(\cdot)$ with $W(\cdot)$ \\
    compute $\Delta(n)=F(n)$ \\
    $stamp_{n}=0$
}
\While {$|S|<k$}
{

    $n=\argmax_{n\in V\setminus~S}{\Delta(n)}$   \\
    \If{$stamp_{n}==|S|$}
    {
        $S=S\cup~{n}$  \\
    }\Else
    {
     //for the weighted case, replace $F(\cdot)$ with $W(\cdot)$ \\
        compute $\Delta(n)=F(S\cup~{n})-F(S)$ \\
        $stamp_{n}=|S|$
    }
}
\Return $S$
\end{algorithm}

\subsection{Degree Based Heuristic Algorithm}
To address the scalability issue, we develop an efficient degree based heuristic algorithm. When we revisit the objective function, we can find that a node's contribution to the objective function is highly dependent on its out degree. Thus if we rank the nodes according to their out degrees and take top-k nodes as the seed nodes, we can probably get a good result. Furthermore, when a node is selected, its out neighbours will be informed. This will result in a decrease of other nodes' ``effective'' out degrees, as their out neighbours may have been informed. This observation means that we can benefit from adjusting each node's ``effective'' out degree dynamically. This heuristic is summarized in Algorithm~\ref{alg_edr}. From the algorithm, we can see that it takes only $O(k(n+m))$ time to complete if we store the graph $G$ and the covered nodes set $C$ with appropriate data structures.

\begin{algorithm}[h]
\SetAlgoNoLine
\caption{The Effective Degree Rank Algorithm} \label{alg_edr}
\KwIn{$G=(V,E,T)$, number $k$}
\KwOut{seed set $S$}
initialize $S=\emptyset$ \\
initialize $C=\emptyset$ \\

\For{each node $n$ in $V$}
{
    $EffectiveDegree(n)=OutDegree(n)$
}
\While {$|S|<k$}
{
    $n=\argmax_{n\in V\backslash S}{EffectiveDegree(n)}$   \\
    $S=S\cup~{n}$\\
    $C=C\cup~OutNeighbour(n)$ \\

    \For{each node $n$ in $V\setminus~S$}
    {
        $EffectiveDegree(n)=OutDegree(n)-|C\cap~OutNeighbour(n)|$ \\
    }

}
\Return $S$

\end{algorithm}

\section{Conclusion}
In this paper, to better measure the coverage of information propagation, we distinguish the informed node from the inactive node and explore the value of the informed nodes. Meanwhile, we formulate a novel problem called information coverage maximization which aims to maximize the expected number of both active nodes and informed nodes. Furthermore, we prove the proposed problem is NP-hard and submodular in the IC model and LT model. We also show that the computation of information coverage is \#P-hard in IC model and LT model. Then based on the properties of the problem, we design two algorithms to solve it. Finally, we conduct extensive experiments to verify our idea. The experimental results show the difference between influence maximization and information coverage maximization. The performance of the proposed algorithms is also demonstrated in the experiments. We hope our study could lead to more future works.


\begin{acks}

\end{acks}

\bibliographystyle{ACM-Reference-Format-Journals}
\bibliography{icm}


\begin{thebibliography}{00}


\ifx \showCODEN    \undefined \def \showCODEN     #1{\unskip}     \fi
\ifx \showDOI      \undefined \def \showDOI       #1{{\tt DOI:}\penalty0{#1}\ }
  \fi
\ifx \showISBNx    \undefined \def \showISBNx     #1{\unskip}     \fi
\ifx \showISBNxiii \undefined \def \showISBNxiii  #1{\unskip}     \fi
\ifx \showISSN     \undefined \def \showISSN      #1{\unskip}     \fi
\ifx \showLCCN     \undefined \def \showLCCN      #1{\unskip}     \fi
\ifx \shownote     \undefined \def \shownote      #1{#1}          \fi
\ifx \showarticletitle \undefined \def \showarticletitle #1{#1}   \fi
\ifx \showURL      \undefined \def \showURL       #1{#1}          \fi

\bibitem[\protect\citeauthoryear{Aggarwal, Khan, and Yan}{Aggarwal
  et~al\mbox{.}}{2011}]%
        {aggarwal2011flow}
{Charu~C Aggarwal}, {Arijit Khan}, {and} {Xifeng Yan}. 2011.
\newblock \showarticletitle{On Flow Authority Discovery in Social Networks.}.
  In {\em SDM}. SIAM, 522--533.
\newblock


\bibitem[\protect\citeauthoryear{Borgs, Brautbar, Chayes, and Lucier}{Borgs
  et~al\mbox{.}}{2014}]%
        {borgs2014maximizing}
{Christian Borgs}, {Michael Brautbar}, {Jennifer~T Chayes}, {and} {Brendan
  Lucier}. 2014.
\newblock \showarticletitle{Maximizing Social Influence in Nearly Optimal
  Time.}. In {\em SODA}. SIAM, 946--957.
\newblock


\bibitem[\protect\citeauthoryear{Chen, Lakshmanan, and Castillo}{Chen
  et~al\mbox{.}}{2013}]%
        {chen2013information}
{W. Chen}, {Laks~V.S. Lakshmanan}, {and} {C. Castillo}. 2013.
\newblock {\em Information and Influence Propagation in Social Networks}.
\newblock Morgan and Claypool.
\newblock


\bibitem[\protect\citeauthoryear{Chen, Lu, and Zhang}{Chen
  et~al\mbox{.}}{2012}]%
        {chen2012time}
{Wei Chen}, {Wei Lu}, {and} {Ning Zhang}. 2012.
\newblock \showarticletitle{Time-Critical Influence Maximization in Social
  Networks with Time-Delayed Diffusion Process.}. In {\em AAAI}.
\newblock


\bibitem[\protect\citeauthoryear{Chen, Wang, and Wang}{Chen
  et~al\mbox{.}}{2010a}]%
        {chen2010scalable}
{Wei Chen}, {Chi Wang}, {and} {Yajun Wang}. 2010a.
\newblock \showarticletitle{Scalable influence maximization for prevalent viral
  marketing in large-scale social networks}. In {\em SIGKDD}. ACM, 1029--1038.
\newblock


\bibitem[\protect\citeauthoryear{Chen, Yuan, and Zhang}{Chen
  et~al\mbox{.}}{2010b}]%
        {chen2010scalablelt}
{Wei Chen}, {Yifei Yuan}, {and} {Li Zhang}. 2010b.
\newblock \showarticletitle{Scalable influence maximization in social networks
  under the linear threshold model}. In {\em ICDM}. IEEE, 88--97.
\newblock


\bibitem[\protect\citeauthoryear{Cheng, Shen, Huang, Chen, and Cheng}{Cheng
  et~al\mbox{.}}{2014}]%
        {Cheng_imrank}
{Suqi Cheng}, {Huawei Shen}, {Junming Huang}, {Wei Chen}, {and} {Xueqi Cheng}.
  2014.
\newblock \showarticletitle{IMRank: Influence Maximization via Finding
  Self-consistent Ranking}. In {\em SIGIR}. ACM, 475--484.
\newblock


\bibitem[\protect\citeauthoryear{Cheng, Shen, Huang, Zhang, and Cheng}{Cheng
  et~al\mbox{.}}{2013}]%
        {cheng2013staticgreedy}
{Suqi Cheng}, {Huawei Shen}, {Junming Huang}, {Guoqing Zhang}, {and} {Xueqi
  Cheng}. 2013.
\newblock \showarticletitle{StaticGreedy: solving the scalability-accuracy
  dilemma in influence maximization}. In {\em CIKM}. ACM, 509--518.
\newblock


\bibitem[\protect\citeauthoryear{Goldenberg, Libai, and Muller}{Goldenberg
  et~al\mbox{.}}{2001}]%
        {goldenberg2001talk}
{Jacob Goldenberg}, {Barak Libai}, {and} {Eitan Muller}. 2001.
\newblock \showarticletitle{Talk of the network: A complex systems look at the
  underlying process of word-of-mouth}.
\newblock {\em Marketing letters\/} {12}, 3 (2001), 211--223.
\newblock


\bibitem[\protect\citeauthoryear{Goyal, Bonchi, and Lakshmanan}{Goyal
  et~al\mbox{.}}{2011a}]%
        {goyal2011data}
{Amit Goyal}, {Francesco Bonchi}, {and} {Laks~VS Lakshmanan}. 2011a.
\newblock \showarticletitle{A data-based approach to social influence
  maximization}.
\newblock {\em Proceedings of the VLDB Endowment\/} {5}, 1 (2011), 73--84.
\newblock


\bibitem[\protect\citeauthoryear{Goyal, Lu, and Lakshmanan}{Goyal
  et~al\mbox{.}}{2011b}]%
        {goyal2011simpath}
{Amit Goyal}, {Wei Lu}, {and} {Laks~VS Lakshmanan}. 2011b.
\newblock \showarticletitle{Simpath: An efficient algorithm for influence
  maximization under the linear threshold model}. In {\em ICDM}. IEEE,
  211--220.
\newblock


\bibitem[\protect\citeauthoryear{Granovetter}{Granovetter}{1978}]%
        {granovetter1978threshold}
{Mark Granovetter}. 1978.
\newblock \showarticletitle{Threshold models of collective behavior}.
\newblock {\em American journal of sociology\/} {83}, 6 (1978), 1420.
\newblock


\bibitem[\protect\citeauthoryear{Karp}{Karp}{1972}]%
        {karp1972reducibility}
{Richard~M Karp}. 1972.
\newblock {\em Reducibility among combinatorial problems}.
\newblock Springer.
\newblock


\bibitem[\protect\citeauthoryear{Kempe, Kleinberg, and Tardos}{Kempe
  et~al\mbox{.}}{2003}]%
        {kempe2003maximizing}
{David Kempe}, {Jon Kleinberg}, {and} {{\'E}va Tardos}. 2003.
\newblock \showarticletitle{Maximizing the spread of influence through a social
  network}. In {\em SIGKDD}. ACM, 137--146.
\newblock


\bibitem[\protect\citeauthoryear{Kempe, Kleinberg, and Tardos}{Kempe
  et~al\mbox{.}}{2005}]%
        {kempe2005influential}
{David Kempe}, {Jon Kleinberg}, {and} {{\'E}va Tardos}. 2005.
\newblock \showarticletitle{Influential nodes in a diffusion model for social
  networks}.
\newblock In {\em Automata, languages and programming}. Springer, 1127--1138.
\newblock


\bibitem[\protect\citeauthoryear{Kim, Kim, and Yu}{Kim et~al\mbox{.}}{2013}]%
        {kim2013scalable}
{Jinha Kim}, {Seung-Keol Kim}, {and} {Hwanjo Yu}. 2013.
\newblock \showarticletitle{Scalable and parallelizable processing of influence
  maximization for large-scale social networks?}. In {\em ICDE}. IEEE,
  266--277.
\newblock


\bibitem[\protect\citeauthoryear{Kimura and Saito}{Kimura and Saito}{2006}]%
        {kimura2006tractable}
{Masahiro Kimura} {and} {Kazumi Saito}. 2006.
\newblock \showarticletitle{Tractable models for information diffusion in
  social networks}.
\newblock In {\em Knowledge Discovery in Databases: PKDD 2006}. Springer,
  259--271.
\newblock


\bibitem[\protect\citeauthoryear{Leskovec, Krause, Guestrin, Faloutsos,
  VanBriesen, and Glance}{Leskovec et~al\mbox{.}}{2007}]%
        {leskovec2007cost}
{Jure Leskovec}, {Andreas Krause}, {Carlos Guestrin}, {Christos Faloutsos},
  {Jeanne VanBriesen}, {and} {Natalie Glance}. 2007.
\newblock \showarticletitle{Cost-effective outbreak detection in networks}. In
  {\em SIGKDD}. ACM, 420--429.
\newblock


\bibitem[\protect\citeauthoryear{Liu, Cong, Xu, and Zeng}{Liu
  et~al\mbox{.}}{2012}]%
        {liu2012time}
{Bo Liu}, {Gao Cong}, {Dong Xu}, {and} {Yifeng Zeng}. 2012.
\newblock \showarticletitle{Time constrained influence maximization in social
  networks.}. In {\em ICDM}. 439--448.
\newblock


\bibitem[\protect\citeauthoryear{Liu, Tang, Han, Jiang, and Yang}{Liu
  et~al\mbox{.}}{2010}]%
        {liu2010mining}
{Lu Liu}, {Jie Tang}, {Jiawei Han}, {Meng Jiang}, {and} {Shiqiang Yang}. 2010.
\newblock \showarticletitle{Mining topic-level influence in heterogeneous
  networks}. In {\em CIKM}. ACM, 199--208.
\newblock


\bibitem[\protect\citeauthoryear{Liu, Xiang, Chen, Xiong, Tang, and Yu}{Liu
  et~al\mbox{.}}{2014}]%
        {Liu_2014_im}
{Qi Liu}, {Biao Xiang}, {Enhong Chen}, {Hui Xiong}, {Fangshuang Tang}, {and}
  {Jeffrey~Xu Yu}. 2014.
\newblock \showarticletitle{Influence Maximization over Large-Scale Social
  Networks: A Bounded Linear Approach}. In {\em CIKM}. ACM, 171--180.
\newblock


\bibitem[\protect\citeauthoryear{Minoux}{Minoux}{1978}]%
        {minoux1978accelerated}
{Michel Minoux}. 1978.
\newblock \showarticletitle{Accelerated greedy algorithms for maximizing
  submodular set functions}.
\newblock In {\em Optimization Techniques}. Springer, 234--243.
\newblock


\bibitem[\protect\citeauthoryear{Nemhauser, Wolsey, and Fisher}{Nemhauser
  et~al\mbox{.}}{1978}]%
        {nemhauser1978analysis}
{George~L Nemhauser}, {Laurence~A Wolsey}, {and} {Marshall~L Fisher}. 1978.
\newblock \showarticletitle{An analysis of approximations for maximizing
  submodular set functions}.
\newblock {\em Mathematical Programming\/} {14}, 1 (1978), 265--294.
\newblock


\bibitem[\protect\citeauthoryear{Tang, Liu, Zhu, Chen, and Zhu}{Tang
  et~al\mbox{.}}{2014}]%
        {tang2014diversified}
{Fangshuang Tang}, {Qi Liu}, {Hengshu Zhu}, {Enhong Chen}, {and} {Feida Zhu}.
  2014.
\newblock \showarticletitle{Diversified social influence maximization}. In {\em
  ASONAM}. IEEE, 455--459.
\newblock


\bibitem[\protect\citeauthoryear{Valiant}{Valiant}{1979}]%
        {valiant1979complexity}
{Leslie~G Valiant}. 1979.
\newblock \showarticletitle{The complexity of enumeration and reliability
  problems}.
\newblock {\it SIAM J. Comput.} {8}, 3 (1979), 410--421.
\newblock


\bibitem[\protect\citeauthoryear{Wang, Cong, Song, and Xie}{Wang
  et~al\mbox{.}}{2010}]%
        {Wang_community}
{Yu Wang}, {Gao Cong}, {Guojie Song}, {and} {Kunqing Xie}. 2010.
\newblock \showarticletitle{Community-based Greedy Algorithm for Mining top-K
  Influential Nodes in Mobile Social Networks}. In {\em SIGKDD}. ACM,
  1039--1048.
\newblock


\bibitem[\protect\citeauthoryear{Wang, Wang, Liu, and Chen}{Wang
  et~al\mbox{.}}{2014}]%
        {wang2014influential}
{Zhefeng Wang}, {Hao Wang}, {Qi Liu}, {and} {Enhong Chen}. 2014.
\newblock \showarticletitle{Influential nodes selection: a data reconstruction
  perspective}. In {\em SIGIR}. ACM, 879--882.
\newblock


\bibitem[\protect\citeauthoryear{Xiang, Liu, Chen, Xiong, Zheng, and
  Yang}{Xiang et~al\mbox{.}}{2013}]%
        {ijcai13biaoxiang}
{Biao Xiang}, {Qi Liu}, {Enhong Chen}, {Hui Xiong}, {Yi Zheng}, {and} {Yu
  Yang}. 2013.
\newblock \showarticletitle{PageRank with Priors: An Influence Propagation
  Perspective}. In {\em IJCAI}.
\newblock


\end{thebibliography}



\medskip

\end{document}